\documentclass[12pt]{iopart}
\usepackage{graphicx}
\usepackage{dcolumn}
\usepackage{bm}
\usepackage{wasysym}
\usepackage{latexsym}
\usepackage{amsthm}

\newtheorem{thm}{Theorem}[section]

\theoremstyle{remark}

\theoremstyle{definition}

\begin{document}

\title{Edge local complementation for logical cluster states}
\date{\today}

\author{Jaewoo Joo}
\address{School of Physics and Astronomy, University of Leeds, Leeds LS2 9JT, UK}

\author{David L. Feder}
\address{Institute for Quantum Information Science, University of Calgary, Alberta T2N 1N4, Canada}

\begin{abstract}
A method is presented for the implementation of edge local complementation in 
graph states, based on the application of two Hadamard operations and a single 
controlled-phase (CZ) gate. As an application, we demonstrate an efficient 
scheme to construct a one-dimensional logical cluster state based on the 
five-qubit quantum error-correcting code, using a sequence of edge local 
complementations. A single physical CZ operation, together with local 
operations, is sufficient to create a logical CZ operation between two 
logical qubits. The same construction can be used to generate any encoded
graph state. This approach in concatenation may allow one to create a 
hierarchical quantum network for quantum information tasks.

\end{abstract}

\maketitle

\section{Introduction}

Multipartite entangled states are fundamental resources for quantum 
computation, with many mysteries yet to be understood~\cite{Horodecki}.
A particularly useful and interesting set of multipartite entangled states are
the so-called graph states~\cite{HEB+04}. These are quantum states associated 
with mathematical graphs, where vertices represent qubits in superposition
states and edges represent the maximally entangling controlled-phase ($CZ$) 
gates between them. Building complex graph states is a difficult task in 
practice (i.e. in experiments), because it requires the application of $CZ$ 
gates between arbitrary qubits; that said, considerable strides have been made
in recent years~\cite{expt}. It is nevertheless useful to consider the 
circumstances under which specific multipartite graph states can be 
constructed efficiently.

A class of particularly useful graph states are quantum error-correcting codes 
(QECCs). These are used to prevent quantum information leakage, since quantum
information is generically fragile against interactions with the 
environment~\cite{NC+01}. Standard QECCs can protect quantum information
against an arbitrary error on a single qubit. Several schemes of
measurement-based quantum computation with embedded quantum error correction
have been recently proposed but the structure of logical cluster states is
very complex~\cite{SDKO+07,JF+09,FY+10}. Very recently, a concatenation
scheme for a single logical qubit encoded in the five-qubit QECC (5QECC) has
been studied in the graph-state context~\cite{BCG+09}. While topological 
approaches to fault-tolerance in graph-state quantum computation yield
higher error thresholds~\cite{RH+07}, directly encoding the quantum 
information in QECC graphs might turn out to be more practical experimentally 
if efficient methods for constructing these states can be found.

We propose that multipartite graph states, which are useful for
constructing logical cluster states with 5QECC, can be efficiently
built by local Hadamard operations from simpler graph states. In
this paper, we prove that the mathematical operation called {\em
edge local complementation} (ELC) \cite{B+88}, which is defined by a
series of {\em local complementation} (LC) operations on a graph
\cite{HEB+04,HF+81}, is efficiently realizable in specific graph states
because it is equivalent to the action of local Hadamard operations.
From the mathematical point of view, LC transforms a given graph into 
another, with a different adjacency matrix; in practice, local complementation
of a given vertex complements the subgraph corresponding to its neighborhood.
From the quantum information point of view, LC corresponds to a set of local 
operations on a given graph state that therefore preserves any entanglement 
measure, yet describes a different graph state. Yet the cost of generating the 
new graph from a completely unentangled state would be significantly higher if 
the total number of edges is larger than in the original graph state. Our 
results indicate that the apparently complex nature of multipartite 5QECC 
states should not in itself be an impediment to their experimental generation, 
because they are in fact generically simple graphs under ELC.

This paper is organized as follows. We introduce the graph state notation 
in Section~\ref{Sec2}. The definition of edge local complementation and
its equivalence to Hadamard operations in graph states are discussed in 
Section~\ref{Sec3}. In Section \ref{Sec4}, we present the step-wise method of 
building one-dimensional (1D) logical cluster states. Finally, we summarize
our results with future research interests.

\section{Background}
\label{Sec2}

Let us begin with the definition of graphs and graph operations.
In graph theory, a graph $G=(V,E)$ is given by $N$ vertices 
$V=\{a_1,\ldots,a_N\}$ and edges $E$ corresponding to a linked line between 
two adjacent (neighboring) vertices. We only consider simple graphs with no 
self-loops and no multiple edges. If a vertex $c\in V$ is chosen in a graph, 
the other vertices are represented by its $n$ neighboring vertices 
${\mathcal N}(c)=\{b_1,\ldots,b_n\}\in V$ and outer vertices 
$V\backslash\{c,b_1,\ldots,b_n\}=\{o_1,\ldots o_{N-n-1}\}\in V$. The 
neighborhood of all of the vertices is defined by the adjacency matrix $A$, an 
$N\times N$ symmetric matrix with elements $A_{ij}=1$ iff $\{a_i,a_j\}\in E$.

All simple graphs correspond to a class of quantum states called {\em graph 
states}~\cite{HEB+04}, in which each vertex is represented by a qubit in a 
superposition state and an edge corresponds to the application of a maximally 
entangling gate.
Specifically, an $N$-qubit graph state $|G\rangle$ is defined as
\begin{equation}
|G\rangle=\bigotimes_{i\neq j}\left(CZ_{i,j}\right)^{A_{ij}}
|+\rangle^{\otimes N},
\end{equation}
where $|\pm \rangle$=$(|0\rangle\pm|1\rangle)/\sqrt{2}$ are the 
$\pm 1$-eigenstates of $X$ (here $\{X,Y,Z\}$ are the $2\times 2$ Pauli 
operators), and $CZ=\mbox{diag}(1,1,1,-1)$ is the controlled-phase 
(controlled-$Z$) gate acting between two qubits. Graph states can be defined 
in at least two equivalent ways, both of which will prove useful for our
purposes. Because the $CZ$ operations can be written as
\begin{eqnarray}
\label{eq:CZ01} CZ_{i,j} &=&
{1\over2} \left(
I_{i} I_{j} + I_{i}  Z_{j} + Z_{i}  I_{j} - Z_{i} Z_{j} \right)
\nonumber \\ 
&=& {1\over2} ( 1 + (-1)^{x_j} + (-1)^{x_i} - (-1)^{(x_i + x_j)})
=(-1)^{x_i\,x_j},
\end{eqnarray}
where $I_i$ is the identity matrix applied at site $i$, the graph state is 
given by a quadratic form of a Boolean function $p(x)$
\begin{eqnarray}
\label{eq:quadratic01} |G\rangle= {1\over \sqrt{2^N}}
(-1)^{p(x)}|x_{1}\cdots x_{N} \rangle,
\end{eqnarray}
where $x_i\in\{0,1\}$ and $p(x)=\sum_{i\neq j}A_{ij} x_i x_j$~\cite{JB+03}. 
Obviously, the value $x_ix_j=1$ iff $x_i=x_j=1$ (otherwise the value is 0), 
so $p(x)$ is a quadratic polynomial representing the graph adjacency matrix.
Alternatively, the state $|G\rangle$ is the fixed eigenvector, with unit 
eigenvalue, of the $N$ independent commuting operators 
\begin{equation}
S(a)=X_a\bigotimes_{b={\mathcal N}(a)}Z_b , 
\end{equation}
i.e.\ $S(a)|G\rangle=|G\rangle$ for all $a\in V$. Because the 
$\{S(a),\,a\in V\}$ generate a set of $2^N$ stabilizer operators $S$, these
$N$ generators uniquely define $|G\rangle$. 

\begin{figure}
\centering
\includegraphics[height=8cm,angle=-90]{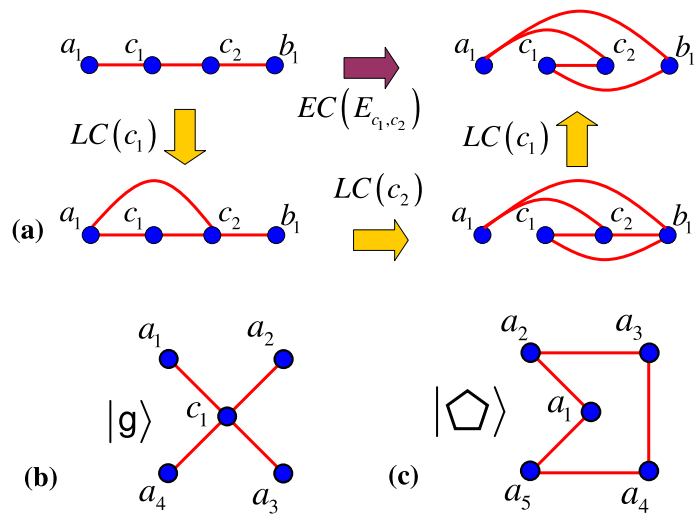}
\vspace{0.7cm} \caption{
\label{fig:edgecomplement01}
A simple example of edge local complementation on a graph state is shown in 
(a). The graph consists of four qubits (blue circles) and initially has three 
edges (red lines), forming a linear graph. Edge local complementation is given 
by three sequential (vertex) local complementations at $c_1$, $c_2$, and $c_1$, 
respectively. A five-qubit star graph state $|S_5\rangle\equiv|g\rangle$ is 
shown in (b) and (c) depicts a pentagon cycle graph state 
$|C_5\rangle\equiv|\pentagon\rangle$. 
}
\end{figure}

Fig.~\ref{fig:edgecomplement01}(b) and (c) shows two simple and important 
examples of graph states that are not equivalent under local-unitary
transformations, the star $S_N$ and cycle $C_N$ graphs. The star graphs 
correspond to GHZ states:
\begin{eqnarray}
\label{eq:GHZ}
|S_n\rangle = {1\over\sqrt{2^{n+1}}} (-1)^{q(x)}|x_{c_{1}}\,x_{a_1} 
\cdots x_{a_n} \rangle
\end{eqnarray}
with $q(x)=\sum_{i=1}^n x_{c_1} x_{a_i}$; these are LC-equivalent to the 
complete graphs $K_{n+1}$. Fig.~\ref{fig:edgecomplement01}(b) depicts 
$|S_4\rangle\equiv |g\rangle$. The cycle graph state is equal to
\begin{eqnarray}
\label{eq:Pentagon} |C_N\rangle
    = {1\over\sqrt{2^N}} (-1)^{r(x)} |x_{a_1}\cdots x_{a_N} \rangle
\end{eqnarray}
with $r(x)= x_{a_1} x_{a_N} + \sum_{i=2}^{N-1}x_{a_i} x_{a_{i+1}}$.
Fig.~\ref{fig:edgecomplement01}(c) shows $|C_5\rangle\equiv|\pentagon\rangle$.
The former is related to classically encoded graph states and the latter to 
5QECC \cite{Bennett+96}.

Local complementation and edge local complementation are two operations used 
to classify locally-equivalent graphs that are generally inequivalent under 
isomorphism (vertex permutation). The action of local complementation 
$LC(a)$ at the vertex $a$ transforms the graph $G$ by replacing the subgraph 
associated with the neighboring vertices ${\mathcal N}(a)$ by its 
complement~\cite{HF+81}. The new graph generated by $LC(a)$ on $G$ is locally 
equivalent to the original graph. It is important to note that the $LC(a)$ 
operation does not affect the edges of outer vertices in the graph $G$; only 
the neighborhood of vertex $a$ is affected. The action of edge local 
complementation $ELC(a,b)$ on the edge $\{a,b\}\in E$ is defined by three local 
complementations: $ELC(a,b)=LC(a)\,LC(b)\,LC(a)=LC(b)\,LC(a)\,LC(b)$.
The action of ELC on the edge $\{a,b\}$ can be understood as follows. Consider 
any pair of vertices $\{c,d\}\in E$, where $c$ is a neighbor of $a$ but not 
$b$ and $d$ is a neighbor of $b$ but not $a$ (or vice versa); alternatively, 
$c$ and $d$ can both be neighbors of $a$ and $b$. ELC then corresponds to 
complementing the edge between $c$ and $d$, i.e.\ if $\{c,d\}\in E$ then
delete the edge, and add it if $\{c,d\}\notin E$. In addition, the 
neighborhoods of $a$ and $b$ are replaced with one another. Edge local 
complementation has been investigated for recognizing the edge locally 
equivalence of two graphs~\cite{NestMoor+05} and for understanding the 
relationship between classical codes and graphs \cite{DPRK+10}.

In the context of graph states, local equivalence implies that one graph state 
can be transformed into another by the action of single-qubit (i.e.\ local)
operations. It is well-known that two graph states that are equivalent under 
stochastic local operations and classical communication (SLOCC) must also be 
equivalent under the local unitary (LU) operations~\cite{NDM+04}. A 
long-standing conjecture held that LU equivalence also implied
equivalence under the action of Clifford-group elements (operations that map
the Pauli group to itself), though this was recently proved to be 
false in general~\cite{JCWY+10}. 

Nevertheless, the transformations $LC$ (and 
therefore $ELC$) on graph states can be expressed solely in terms of local 
Clifford operations~\cite{HEB+04}:
\begin{equation}
LC(a)=\sqrt{-\imath X_a}\bigotimes_{b={\mathcal N}(a)}\sqrt{\imath Z_b}\propto
\sqrt{S(a)},
\label{eq:LCdef}
\end{equation}
where $\imath\equiv\sqrt{-1}$.
Suppose that $|G\rangle$ possesses qubit $c_1$ (called a {\em core} qubit) 
connected to $n$ neighboring qubits. The action of $LC(c_1)$ corresponds to the 
application of ${n(n-1)\over 2}$ $CZ$ operations on the graph state, creating 
edges between ${\mathcal N}(c_1)$ if there were none and removing them 
otherwise ($CZ^2=I$). Although entanglement between the two graph states is 
the same due to the invariance of entanglement under local unitary operations,
the number of {\it effective} $CZ$ operations (i.e.\ the number of edges) 
differs. Edge local complementation on the edge $\{a,b\}$ would then 
correspond to the operation
\begin{equation}
ELC(a,b)=\sqrt{-\imath X_a}\bigotimes_{c={\mathcal N}(a)}\sqrt{\imath Z_c}
\sqrt{-\imath X_b}\bigotimes_{d={\mathcal N}'(b)}\sqrt{\imath Z_d}
\sqrt{-\imath X_a}\bigotimes_{f={\mathcal N}''(a)}\sqrt{\imath Z_f},
\end{equation}
where the ${\mathcal N}'$ and ${\mathcal N}''$ are reminders that the 
neighborhoods themselves change under the $LC$ operations. Recognizing that
$a$ and $b$ remain neighbors, this can be rewritten
\begin{equation}
ELC(a,b)= (- \imath) H_a\otimes H_b \bigotimes_{c={\mathcal N}(a)\backslash b}
\sqrt{\imath Z_c}\bigotimes_{d={\mathcal N}'(b)\backslash a}\sqrt{\imath Z_d}
\bigotimes_{f={\mathcal N}''(a)\backslash b}\sqrt{\imath Z_f},
\end{equation}
where $H_a=(X_a+Z_a)/\sqrt{2} = \sqrt{-\imath X_a} \sqrt{\imath Z_a}\sqrt{-\imath X_a}$ is the Hadamard operator on qubit $a$. One of the goals of this 
manuscript is to show that the result of this operation on graph states can be 
expressed in the simpler form $ELC(a,b)|G\rangle=H_a\otimes H_b|G\rangle$, 
requiring the application of far fewer local operations.

Simple examples of LC and ELC are shown in 
Fig.~\ref{fig:edgecomplement01}(a). 
The initial graph state $|G\rangle$ consists of four qubits and three edges. 
After the first $LC(c_{1})$, because no edge exists between two neighboring
qubits of $c_{1}$ in state $|G\rangle$, an edge is drawn between
them. After $LC(c_{2})$, the edge on qubits $a_1$ and $c_1$ is
deleted by a rule of the local complementation because two
sequential $CZ$ operations become the identity between $a_1$ and
$c_1$. Finally, after the last $LC(c_1)$, the number of edges are
four on the final graph state, which is represented by $ELC(c_1,c_2)|G\rangle$, 
although all four graph states are locally equivalent.

\section{Edge local complementation via Hadamard gates}
\label{Sec3}

Consider two disconnected graphs $G_1=(V_1,E_1)$ and $G_2=(V_2,E_2)$ and their
respective graph states states $|G_1\rangle$ and $|G_2\rangle$; each possesses
a core vertex (qubit) $c_{1}\in V_1$ and $c_{2}\in V_2$, respectively. A $CZ$ 
operation is then applied to the two core qubits, linking the two graph states 
into a single connected graph 
$|G_{u}\rangle=CZ_{c_1,c_2}|G_1\rangle\otimes|G_2\rangle$. If 
a Hadamard operation is then applied to each core qubit, the graph 
$|G_u\rangle$ is transformed into another locally equivalent graph state 
$|G_H\rangle=H_{c_1}\otimes H_{c_2}|G_u\rangle$. Below we show that the state
$|G_{H}\rangle$ is the edge local complement of $|G_{u}\rangle$, i.e.\ that 
$|G_{H}\rangle = H_{c_1}\otimes H_{c_2}|G_u\rangle=ELC(c_1,c_2)|G_{u}\rangle$.
It is important to note that the equivalence of edge complementation on 
$\{c_1,c_2\}\in E$ with the application of Hadamard operations on $c_1$ and 
$c_2$ is only valid if ${\mathcal N}(c_1)\backslash c_2\cap {\mathcal N}(c_2)
\backslash c_1=0$, i.e.\ that prior to the application of $CZ_{c_1,c_2}$,
the neighborhoods of $c_1$ and $c_2$ were completely disjoint. Our results
do not apply to graphs where $c_1$ and $c_2$ share a neighborhood (other than 
themselves).

The main theorem of the paper is the following:

\begin{thm}
Consider two graph states 
$|G_1\rangle= {1\over \sqrt{2^{N_1}}} (-1)^{\sum_{i\neq j}A_{ij}^{(1)}x_ix_j}
|x_{1}\cdots x_{N_1} \rangle$ and $|G_2\rangle= {1\over \sqrt{2^{N_2}}}
(-1)^{\sum_{i\neq j}A_{ij}^{(2)}x_ix_j}|x_{1}\cdots x_{N_2} \rangle$,
defined by adjacency matrices $A^{(1)}$ and $A^{(2)}$ on independent vertex
sets $V_1\in\{a_1^{(1)},\ldots , a_{N_1}^{(1)}\}$ and 
$V_2\in\{a_1^{(2)},\ldots , a_{N_2}^{(2)}\}$, respectively. If core qubits, 
$c_1$ and $c_{2}$, are chosen at random from each of these vertex sets, and are 
entangled with one another by means of a $CZ$ gate, then
\begin{equation}
H_{c_1}H_{c_2}CZ_{c_1,c_2}|G_1\rangle |G_2\rangle=ELC(c_1,c_2)CZ_{c_1,c_2}
|G_1\rangle |G_2\rangle,
\end{equation}
where the edge local complementation operator on the edge $\{c_1,c_2\}$ is 
$ELC(c_1,c_2)=LC(c_1)LC(c_2)LC(c_1)$, and the (vertex) local complementation 
operator at qubit $a$ complements the edge set of its neighborhood 
${\mathcal N}(a)$.
\end{thm}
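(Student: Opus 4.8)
The plan is to prove the operator identity by computing the action of both sides on the explicit Boolean-function form of the graph state and showing they produce the same output state. Since the key claim reduces (via the earlier derivation in the excerpt) to showing that $ELC(c_1,c_2)$ acting on a graph state with disjoint neighborhoods is equal to $H_{c_1}\otimes H_{c_2}$ acting on that state, I would work with the quadratic-form representation in Eq.~(\ref{eq:quadratic01}) throughout, treating each side as a unitary map on amplitudes $(-1)^{p(x)}/\sqrt{2^N}$.

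First I would set up explicit notation for the combined graph $|G_u\rangle = CZ_{c_1,c_2}|G_1\rangle|G_2\rangle$. After the $CZ$ gate, the core qubits $c_1$ and $c_2$ are adjacent, and I can write the Boolean phase polynomial $p(x)$ as a sum of three disjoint pieces: the internal edges of $G_1$, the internal edges of $G_2$, and the single new term $x_{c_1}x_{c_2}$. Here the disjointness hypothesis ${\mathcal N}(c_1)\backslash c_2\cap{\mathcal N}(c_2)\backslash c_1=\emptyset$ is what guarantees these pieces do not interfere, so the only cross-coupling between the two vertex sets is through the core edge. I would then compute the right-hand side directly by applying the combinatorial ELC rule stated in the excerpt: complement every edge $\{c,d\}$ with $c\in{\mathcal N}(c_1)$, $d\in{\mathcal N}(c_2)$ (which, under disjointness, are all non-edges, so ELC simply adds a complete bipartite graph between the two neighborhoods), and swap the neighborhoods of $c_1$ and $c_2$. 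This gives an explicit new adjacency matrix $A'$ and hence an explicit output phase polynomial $p'(x)$.

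Next I would compute the left-hand side by applying the Hadamard gates directly to the amplitude form. A single Hadamard on qubit $c_1$ replaces $|x_{c_1}\rangle$ by $\tfrac{1}{\sqrt 2}\sum_{y_{c_1}}(-1)^{x_{c_1}y_{c_1}}|y_{c_1}\rangle$, and I would carry out the resulting summation over the two core variables. The crucial observation is that the phase polynomial $p(x)$ is \emph{linear} in each of $x_{c_1}$ and $x_{c_2}$ separately, so summing over these two variables is a finite Fourier/Walsh transform that can be evaluated in closed form. I expect the sum over $x_{c_1}$ to force a linear constraint that, combined with the sum over $x_{c_2}$, produces exactly the bipartite coupling term between ${\mathcal N}(c_1)$ and ${\mathcal N}(c_2)$ together with the neighborhood swap -- matching $p'(x)$ from the ELC computation. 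Keeping track of the overall phase and normalization factors during this double Walsh--Hadamard transform is the step I expect to be the main obstacle, since subtle global phases can appear and one must verify they cancel (the derivation in the excerpt already isolates an overall $-\imath$ factor in the naive $ELC$ operator, so I would need to confirm consistency with that).

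Finally, having obtained closed-form expressions for the output phase polynomials of both sides, I would verify that they agree modulo $2$ as Boolean functions on the transformed variables, which establishes the equality of the two output graph states and hence the operator identity on the relevant subspace. As a sanity check I would specialize to the four-vertex linear-graph example of Fig.~\ref{fig:edgecomplement01}(a), confirming that the explicit before/after adjacency matrices match what the Hadamard computation yields. The disjointness assumption should enter exactly once, at the point where I claim every complemented edge $\{c,d\}$ is newly added rather than deleted; if the neighborhoods overlapped, some of these edges would already be present and the simple $H_{c_1}\otimes H_{c_2}$ form would fail, which is consistent with the caveat stated before the theorem.
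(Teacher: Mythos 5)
Your proposal is correct, but it reaches the two sides of the identity by different means than the paper. For the Hadamard side, the paper expands $H_{c_i}=(X_{c_i}+Z_{c_i})/\sqrt{2}$, pushes the four resulting Pauli products through the phase factors $(-1)^{x_{c_1}x_{c_2}}$, $(-1)^{q_1(x)}$, $(-1)^{q_2(x)}$ via commutation rules, and collects the surviving terms into a correction $q_3(x)$; you instead perform the double Walsh--Hadamard sum over the two core variables. Your route is, if anything, cleaner: writing $p(x)=x_{c_1}x_{c_2}+x_{c_1}B_1+x_{c_2}B_2+R$ with $B_1=\sum_k x_{b_k^{(1)}}$ and $B_2=\sum_{k'}x_{b_{k'}^{(2)}}$, the sum over the old core bits evaluates in closed form to $2(-1)^{(B_1+y_{c_1})(B_2+y_{c_2})}$, whose expansion is exactly the paper's Eq.~(\ref{eq:quadraticfinal}) --- neighborhood swap plus complete bipartite coupling --- with normalization $\tfrac{1}{2}\cdot 2=1$ and no stray phases; the transform is real, so the $-\imath$ you worried about never arises (it belongs only to the Clifford-unitary realization of ELC, not to the combinatorial operation used in the theorem statement). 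For the ELC side, however, note that the theorem defines $ELC(c_1,c_2)$ as the composition $LC(c_1)LC(c_2)LC(c_1)$ of neighborhood complementations, and the paper's proof carries out those three complementations explicitly (its quadratic forms $r_1,r_2,r_3$), thereby \emph{deriving} the ``swap neighborhoods and complement the bipartite pairs'' rule rather than assuming it. You invoke that rule directly from the background section; to make the proof self-contained you should either cite it as a known fact about ELC or spend a few lines composing the three local complementations as the paper does --- under the disjoint-neighborhood hypothesis this composition is short and confirms the rule. With that one addition your argument is complete and, on the Hadamard side, somewhat more economical than the paper's.
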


\begin{proof}
The core qubits $c_1$ and $c_{2}$ have neighborhood 
${\mathcal N}(c_1)=\{ b_{1}^{(1)},\ldots,b_{n}^{(1)}\}$ and
${\mathcal N}(c_2)=\{b_{1}^{(2)},\ldots, b_{m}^{(2)} \}$), respectively. The 
remaining vertices of the graphs $G_1\rangle$ and $G_2\rangle$ are 
$V_1\backslash\{c_1,b_1^{(1)},\ldots,b_n^{(1)}\}
=\{o_1^{(1)},\ldots o_{N_1-n-1}^{(1)}\}\in V_1$ and
$V_2\backslash\{c_2,b_1^{(2)},\ldots,b_m^{(2)}\}
=\{o_1^{(2)},\ldots o_{N_2-m-1}^{(2)}\}\in V_2$, respectively. Performing a 
$CZ$ operation between these core qubits, the graph state $|G_{u}\rangle$ is
\begin{eqnarray}
\label{eq:GHstep1}
|G_{u}\rangle
&=& CZ_{c_1,c_2} |G_{1}\rangle |G_{2}\rangle
= (-1)^{x_{c_1}x_{c_2}} |G_{1}\rangle |G_{2}\rangle, \nonumber \\
&=& \frac{(-1)^{x_{c_1}x_{c_2}}}{\sqrt{2^{N_1+N_2}}}(-1)^{\left(q_1(x)+q_2(x)
\right)}|x_{a_1^{(1)}} \cdots 
x_{a_{N_1}^{(1)}}\rangle|x_{a_1^{(2)}} \cdots x_{a_{N_2}^{(2)}}\rangle,
\end{eqnarray}
\begin{equation}
q_1(x)=\sum_{i\neq j}A_{ij}^{(1)}x_{a_i^{(1)}}x_{a_j^{(1)}}
=x_{c_1}\sum_{i=1}^nx_{b_i^{(1)}}
+\sum_{i\neq j}A_{ij}^{(1)}x_{o_i^{(1)}}x_{o_j^{(1)}};
\end{equation}
\begin{equation}
q_2(x)=\sum_{i\neq j}A_{ij}^{(2)}x_{a_i^{(2)}}x_{a_j^{(2)}}
=x_{c_2}\sum_{i=1}^mx_{b_i^{(2)}}
+\sum_{i\neq j}A_{ij}^{(2)}x_{o_i^{(2)}}x_{o_j^{(2)}}.
\end{equation}

\subsection{Two Hadamards applied to core qubits}

Consider
\begin{equation}
|G_{H}\rangle=H_{c_{1}}H_{c_{2}}|G_{u}\rangle=\frac{1}{2}\left(X_{c_{1}}X_{c_{2}}+X_{c_{1}}Z_{c_{2}}+Z_{c_{1}}X_{c_{2}}
+Z_{c_{1}}Z_{c_{2}}\right)|G_{u}\rangle.
\end{equation}
This can be simplified by noting that for $b_j\in{\mathcal N}(c_1)$
\begin{eqnarray}
X_{c_1} (-1)^{x_{c_1} x_{b_j}} &=& (-1)^{(x_{c_1} + 1) x_{b_j} } X_{c_1} 
;\nonumber \\
Z_{c_1} (-1)^{x_{c_1} x_{b_j}} &=& (-1)^{x_{c_1}}(-1)^{x_{c_1} x_{b_j}}
= (-1)^{x_{c_1}(x_{b_j}+1) }
.
\end{eqnarray}
One then obtains
\begin{eqnarray}
H_{c_{1}}H_{c_{2}}(-1)^{x_{c_1} x_{c_2}}&=&
\frac{1}{2} (-1)^{x_{c_1} x_{c_2}} \Big[ - (-1)^{(x_{c_1} + x_{c_2})}  X_{c_{1}} X_{c_{2}}\nonumber \\
&+& X_{c_{1}} + X_{c_{2}} + (-1)^{(x_{c_1}+x_{c_2})}\Big].
\label{eq:Hadamardstep01} 
\end{eqnarray}
Applying this to the remaining operators in Eq.~(\ref{eq:GHstep1}) gives
\begin{eqnarray}
|G_H\rangle &=& \frac{(-1)^{x_{c_1}x_{c_2}}}{\sqrt{2^{N_1+N_2}}}
(-1)^{\left(q_1(x)+q_2(x)\right)}\frac{1}{2}\prod_{k,k'}
\Big[ - (-1)^{(x_{c_1} + x_{c_2}+x_{b_k^{(1)}}+x_{b_{k'}^{(2)}})} 
\nonumber \\
&+& (-1)^{x_{b_k^{(1)}}} + (-1)^{x_{b_{k'}^{(2)}}} + (-1)^{(x_{c_1}+x_{c_2})}
\Big]|x_{a_1^{(1)}} \cdots x_{a_{N_1}^{(1)}}\rangle
|x_{a_1^{(2)}} \cdots x_{a_{N_2}^{(2)}}\rangle\nonumber \\
&=& \frac{(-1)^{x_{c_1}x_{c_2}}}{\sqrt{2^{N_1+N_2}}}
(-1)^{\left(q_1(x)+q_2(x)+q_3(x)\right)}
|x_{a_1^{(1)}} \cdots x_{a_{N_1}^{(1)}}\rangle
|x_{a_1^{(2)}} \cdots x_{a_{N_2}^{(2)}}\rangle,
\end{eqnarray}
where
\begin{equation}
q_3(x)=\left(x_{c_1}+x_{c_2}\right)\left(\sum_{k}x_{b_k^{(1)}}
+\sum_{k'}x_{b_{k'}^{(2)}}\right)+\sum_{k,k'}x_{b_k^{(1)}}x_{b_{k'}^{(2)}}.
\end{equation}
Finally, one can combine all the terms to obtain
\begin{eqnarray}
&& \hspace{-1cm} |G_H\rangle=H_{c_1}H_{c_2}CZ_{c1,c2}|G_1\rangle|G_2\rangle
=\frac{(-1)^{p(x)}}{\sqrt{2^{N_1+N_2}}}
|x_{a_1^{(1)}} \cdots x_{a_{N_1}^{(1)}}\rangle
|x_{a_1^{(2)}} \cdots x_{a_{N_2}^{(2)}}\rangle,
\label{eq:Hadamardfinal}
\end{eqnarray}
where 
\begin{eqnarray}
p(x)&=&x_{c_1}x_{c_2}+x_{c_1}\sum_{k'=1}^mx_{b_{k'}^{(2)}}
+x_{c_2}\sum_{k=1}^nx_{b_{k}^{(1)}}+\sum_{k=1}^n\sum_{k'=1}^mx_{b_k^{(1)}}
x_{b_{k'}^{(2)}}\nonumber \\
&+&\sum_{i\neq j}A_{ij}^{(1)}x_{o_i^{(1)}}x_{o_j^{(1)}}
+\sum_{i\neq j}A_{ij}^{(2)}x_{o_i^{(2)}}x_{o_j^{(2)}}.
\label{eq:quadraticfinal}
\end{eqnarray}

\subsection{Edge local complementation on core qubits}

Recall that edge local complementation $ELC(c_1,c_2)$ 
on the edge $\{c_1,c_2\}$ is described by the three local complementations 
$LC(c_1)LC(c_2)LC(c_1)=LC(c_2)LC(c_1)LC(c_2)$. Suppose that the first local 
complementation is performed on $|G_{u}\rangle$ at qubit $c_1$. The result is
that all neighboring qubits of $c_1$ are explicitly connected to each other 
(adding an edge to an existing edge annihilates it). The additional edges are 
given by the quadratic form 
\begin{equation}
r_1(x)=\sum_{i\neq j}x_{b_i^{(1)}}x_{b_j^{(1)}}+x_{c_2}\sum_ix_{b_i^{(1)}}.
\label{eq:r1}
\end{equation}
Next one complements the neighborhood of qubit $c_2$, which is given by the 
quadratic form $x_{c_2}\left(x_{c_1}+\sum_ix_{b_i^{(1)}}+\sum_ix_{b_i^{(2)}}
\right)$; the result is $x_{c_1}\left(\sum_ix_{b_i^{(1)}}+\sum_ix_{b_i^{(2)}}
\right)+\sum_{i\ne j}x_{b_i^{(1)}}x_{b_j^{(1)}}+\sum_{i\ne j}x_{b_i^{(2)}}
x_{b_j^{(2)}}+\sum_{i,j}x_{b_i^{(1)}}x_{b_j^{(2)}}$. The total additional 
edges are then given by the quadratic form
\begin{eqnarray}
r_2(x)&=&x_{c_2}\sum_ix_{b_i^{(1)}}+x_{c_1}\left(\sum_ix_{b_i^{(1)}}
+\sum_ix_{b_i^{(2)}}\right)+\sum_{i\ne j}x_{b_i^{(2)}}x_{b_j^{(2)}}\nonumber \\
&+&\sum_{i,j}x_{b_i^{(1)}}x_{b_j^{(2)}}.
\end{eqnarray}
Last, one complements the neighborhood of qubit $c_1$, which is given by the
quadratic form $x_{c_1}\left(x_{c_2}+\sum_ix_{b_i^{(1)}}+\sum_ix_{b_i^{(1)}}
+\sum_ix_{b_i^{(2)}}\right)=x_{c_1}\left(x_{c_2}+\sum_ix_{b_i^{(2)}}\right)$; 
the result is simply $x_{c_2}\sum_ix_{b_i^{(2)}}
+\sum_{i\neq j}x_{b_i^{(2)}}x_{b_j^{(2)}}$. The quadratic form for the 
additional edges after this final operation is 
\begin{eqnarray}
r_3(x)&=&x_{c_2}\left(\sum_ix_{b_i^{(1)}}+\sum_ix_{b_i^{(2)}}\right)
+x_{c_1}\left(\sum_ix_{b_i^{(1)}}+\sum_ix_{b_i^{(2)}}\right)\nonumber \\
&+&\sum_{i,j}x_{b_i^{(1)}}x_{b_j^{(2)}}.
\end{eqnarray}
Combining this result with the 
remaining terms in the quadratic form~(\ref{eq:GHstep1}), the graph resulting 
from the edge local complementation becomes
\begin{equation}
ELC(c_1,c_2)|G_u\rangle=\frac{(-1)^{p(x)}}{\sqrt{2^{N_1+N_2}}}
|x_{a_1^{(1)}} \cdots x_{a_{N_1}^{(1)}}\rangle
|x_{a_1^{(2)}} \cdots x_{a_{N_2}^{(2)}}\rangle,
\label{eq:ELCfinal}
\end{equation}
where
\begin{eqnarray}
p(x)&=&x_{c_1}x_{c_2}+x_{c_1}\sum_ix_{b_i^{(2)}}+x_{c_2}\sum_ix_{b_i^{(1)}}
+\sum_{i,j}x_{b_i^{(1)}}x_{b_j^{(2)}}\nonumber \\
&+&\sum_{i\neq j}A_{ij}^{(1)}x_{o_i^{(1)}}x_{o_j^{(1)}}
+\sum_{i\neq j}A_{ij}^{(2)}x_{o_i^{(2)}}x_{o_j^{(2)}},
\end{eqnarray}
which is identical to the quadratic form~(\ref{eq:quadraticfinal}).

\end{proof}

Eq.~(\ref{eq:quadraticfinal}) shows that when Hadamard gates are applied to 
both (core) qubits of single edge between two graphs, the result is a new graph 
state corresponding to the effective application of $2(n+m)+nm$ 
controlled-phase operations. These operations have the effect of replacing the 
original neighborhood of each core qubit with the neighborhood of the other
core qubit (and vice versa), while simultaneously adding the neighborhood of
a given core qubit to the neighborhood of the other. That is, from the edge
set one deletes the combinations $\{c_1,b_k^{(1)}\}$ and $\{c_2,b_k^{(2)}\}$,
and adds the combinations $\{c_1,b_k^{(2)}\}$, $\{c_2,b_k^{(1)}\}$, and
$\{b_k^{(1)},b_{k'}^{(2)}\}$. In other words, the Hadamard operations have 
complemented the neighborhood of the edge $\{c_1,c_2\}$, or performed edge 
local complementation. Of particular interest is the special case where both 
of the original graphs $|G_1\rangle$ and $|G_2\rangle$ were star graphs with 
the core qubit corresponding to the maximum-degree vertex, i.e.\ where
$\{o_i^{(1)}\}\not\in V_1$ and $\{o_i^{(2)}\}\not\in V_2$. Then the resulting 
graph would be completely bipartite, with every vertex of the first group 
$\{c_1,b_1^{(1)},\ldots,b_n^{(1)}\}$ connected to every vertex of the second 
group $\{c_2,b_1^{(2)},\ldots,b_n^{(2)}\}$~\cite{Gibbon}.

\subsection{Vertex local complementation}

The above analysis proves that the application of Hadamard operations to
the core qubits $c_1$ and $c_2$ is equivalent to edge local complementation
on the edge $\{c_1,c_2\}$. It is not obvious that edge local complementation
based on the formal definition of local complementation given in 
Eq.~(\ref{eq:LCdef}), $LC(c_1)=\sqrt{-\imath X_{c_1}}\prod_{b={\mathcal N}(c_1)}
\sqrt{\imath Z_b}$, reproduces the same result. Though graph transformations
effected by this expression have already been discussed in Ref.~\cite{HEB+04}
in the context of vertex local complementation, edge local complementation
using this operator was not explicitly explored in that work. In fact, as 
shown below, the application of these unitary gates in order to effect edge 
local complementation requires local operations in addition to the two 
Hadamard gates.

It is convenient to write 
\begin{equation}
\sqrt{-\imath X}=[-I+\imath X]/\sqrt{2};\quad
\sqrt{\imath Z}= [\imath I+ Z]/\sqrt{2}.
\end{equation}
The action of these on quadratic forms is
\begin{eqnarray}
\sqrt{-\imath X_{c_1}}(-1)^{x_{c_1}x_{b_j}}&=&(-1)^{x_{c_1}x_{b_j}} 
\left[-1+i(-1)^{x_{b_j}}X_{c_1} \right]/\sqrt{2}; \nonumber \\
\sqrt{\imath Z_{b_j}}(-1)^{x_{c_1}x_{b_j}}&=&  (-1)^{x_{c_1}x_{b_j}}  \left[\imath + (-1)^{x_{b_j}} \right]/\sqrt{2}.
\end{eqnarray}
Suppose one has an arbitrary graph state $|G\rangle$ defined by quadratic 
form $p(x)$ whose neighborhood of the qubit $c_1$ is 
${\mathcal N}(c_1)=\{b_1,\ldots,b_n\}$, i.e.\ where $p(x)$ includes the term 
$c_1\sum_jb_j$. Local complementation on the vertex $c_1$ then yields
\begin{eqnarray}
LC(c_1)&=&\sqrt{-\imath X_{c_1}} \prod_{j=1}^n 
\sqrt{\imath Z_{b_j}}(-1)^{x_{c_1}x_{b_j}}\nonumber \\
&=& \frac{1}{2^{n/2}} \sqrt{-\imath X_{c_1}} \prod_{j=1}^n 
(-1)^{x_{c_1}x_{b_j}}  \left[\imath + (-1)^{x_{b_j}} \right] \nonumber \\
&=& {1 \over 2^n} \prod_{j=1}^n (-1)^{x_{c_1}x_{b_j}}
\left[\imath + (-1)^{x_{b_j}} \right] \left[-1+i (-1)^{x_{b_j}}X_{c_1} \right]. \nonumber
\end{eqnarray}
When this local complementation operator is applied to the graph state 
$|G\rangle$, the $X_{c_1}$ operator will act only on its eigenstates and will
effectively disappear. The effect of the various terms above is then equivalent
to the new quadratic form 
\begin{equation}
p(x)'=p(x)+\sum_{j\neq k}b_j b_k - \sum_i b_i.
\end{equation}
In other words, $LC(c_1)$ has complemented the neighborhood of qubit $c_1$, by
effectively applying $CZ$ entangling operations to all of its neighbors. In 
addition, it has applied $Z$ gates to all the neighbors. These are local 
operations that commute with the $CZ$s and are therefore unimportant. That 
said, complete equivalence (rather than simply unitary equivalence) under 
edge local complementation would then require the application of additional 
unitary gates beyond the two Hadamard gates.

\section{Application : Efficient generation of 1D logical cluster states}
\label{Sec4}

We now discuss a novel and useful application of the theory of edge local
complementation for quantum information processing. In previous 
work~\cite{JF+09}, we showed that logical cluster states corresponding to 
5QECC can be made with logical $CZ$ operations consisting of many $CZ$ 
operations among the physical qubits. A linear $N$-qubit logical cluster state 
is given by
\begin{equation}
|CS^{L}_{N} \rangle = \prod_{i=1}^{N-1} CZ^{L}_{i\,i+1} |+^{L} \rangle_{1} 
\otimes \cdots \otimes |+^{L} \rangle_{N},
\end{equation}
where $CZ^{L}$ is a logical $CZ$ operation between two logical
qubits and $|\pm^{L} \rangle=(| 0^{L} \rangle \pm | 1^{L}
\rangle)/\sqrt{2}$. For $|CS^{L}_{2} \rangle$ with 5QECC, 25
physical $CZ$ operations are required to construct a logical $CZ$
operation from $| + \rangle^{\otimes 10}$ (see Fig.~3 in Ref.~\cite{JF+09}). 
The construction of many-qubit logical cluster states requires so many
entangling operations to build logical $CZ$ gates as to be impractical for
realistic quantum information processing. In this context, the edge local 
complementation provides an efficient solution to this conundrum: a single 
physical $CZ$ operation and two Hadamard operations are sufficient to build a 
logical $CZ$ operation between two logical qubits.

First we will review how to encode a physical qubit into a logical qubit with 
5QECC. One begins begin with a qubit in state $|0\rangle_{a_1}$ and four 
auxiliary qubits in $|++++\rangle_{a_2 - a_5}$. After a Hadamard operation on 
qubit $a_1$ and four $CZ$ operations between $a_1$ and the others, one obtains
the five-qubit GHZ-type graph state $|g\rangle_{A}$ (see
Fig.~\ref{fig:edgecomplement01}(b) but with $c_1$ replaced by $a_1$ and 
$a_{1,2,3,4}$ replaced by $a_{2,3,4,5}$). After an additional Hadamard 
operation on qubit $a_1$ in $|g\rangle_{A}$, the state is equal to a five-qubit
GHZ state
\begin{eqnarray}
\label{eq:Block02}
    |g^{+}_{H} \rangle_{A} = {1\over\sqrt{2}}
    \big(|+\rangle^{\otimes 5}_{a_1 - a_5} + |-\rangle^{\otimes
    5}_{a_1 - a_5}\big),
\end{eqnarray}
$|\pm\rangle^{\otimes 5}_{a_1 - a_5} = |\pm\rangle_{a_1}
|\pm\rangle_{a_2}|\pm\rangle_{a_3}|\pm\rangle_{a_4}|\pm\rangle_{a_5}$.
Because $|0\rangle_{a_1}=(|+\rangle_{a_1}+|-\rangle_{a_1})/\sqrt{2}$, the
state $|g^{+}_{H} \rangle$ can be understood as a classically
encoded state of $|0 \rangle_{a_1}$ in five qubits (here a classical encoding 
is meant to signify the implementation of a repetition code $|\pm\rangle
\rightarrow |\pm\rangle^{\otimes 5}$). Similarly, if the physical
qubit is initialized in $|1\rangle_{a_1}$, the outcome state is
$|g^{-}_{H} \rangle_{A} = \big(|+\rangle^{\otimes 5}_{a_1 - a_5}
- |-\rangle^{\otimes 5}_{a_1 - a_5}\big)/\sqrt{2}$. The quantum
encoding scheme transforms $|+\rangle^{\otimes 5}_{A}$ into 
$|+^L \rangle_{A}$ and $|-\rangle^{\otimes 5}_{A}$ into $|-^L \rangle_{A}$. 
As shown in Fig.~\ref{fig:edgecomplement01}(c), a pentagon graph operation is
used for encoding logical qubits
\begin{eqnarray}
| +^{L} \rangle_{A} &\equiv& |\pentagon\rangle_A
    = C^{\pentagon}_{a_1-a_5} | + \rangle^{\otimes 5}_{a_1-a_5},\label{eq:minusL} \\
| -^{L} \rangle_{A}&\equiv&|\tilde{\pentagon}\rangle_A
    =C^{\pentagon}_{a_1-a_5}| -\rangle^{\otimes 5}_{a_1-a_5}
    =\prod^{5}_{i=1} Z_i | +^{L} \rangle_{A}, \label{eq:plusL}
\end{eqnarray}
where $C^{\pentagon}_{a_1 - a_5} = CZ_{a_1,a_2}\, CZ_{a_2,a_3}\,
CZ_{a_3,a_4}\, CZ_{a_4,a_5}\, CZ_{a_5,a_1}$. Therefore, the total
encoding operation for a logical qubit is represented by
\begin{eqnarray}
\label{eq:ClassicE01}
    E^{L}_{A}= C^{\pentagon}_{a_1-a_5} \, H_{a_1} \, \left[\prod_{i=2}^{5}
    CZ_{a_1,a_i}\right] \, H_{a_1},
\end{eqnarray}
and $ | \pm^{L} \rangle_{A} = E^{L}_{A} | \pm \rangle_{a_1}| +
\rangle^{\otimes 4}_{a_2 - a_5}$.

\begin{figure}[t]
\centering
\includegraphics[height=8.5cm,angle=-90]{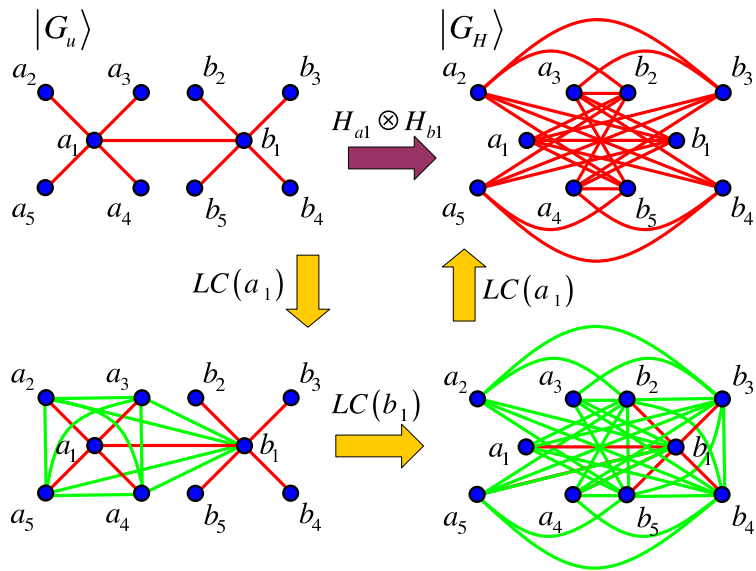}
\vspace{1cm} \caption{ \label{fig:Equ-example} The edge local
complementation consisting of three sequential local
complementations makes the same transformation as the operation
of two Hadamard operations on core qubits. Green lines indicate
new edges created by each local complementation while red lines
do the pre-existing edges. The graph state $|G_H\rangle$ is completely
bipartite.}
\end{figure}

With this toolkit one can show how to build logical cluster states. There are 
two different ways of building a two-qubit logical cluster state from
ten physical qubits $|+\rangle^{\otimes 10}$. The first method is to first 
prepare two logical states in $|+^{L}\rangle$, and then to directly perform a 
logical $CZ$ operation between them:
\begin{eqnarray}
|CS^{L}_{2}\rangle_{AB}  = CZ^{L}_{AB}  \, E^{L}_{A} \, E^{L}_{B}
|+\rangle^{\otimes 10} = CZ^{L}_{AB} | +^{L} \rangle_{A} | +^{L}
\rangle_{B}.
\end{eqnarray}
Because $E^{L}_{A} E^{L}_{B} |+\rangle^{\otimes 10} =
C^{\pentagon}_{A} C^{\pentagon}_{B}|+\rangle^{\otimes 10}$ for this case, 35 
physical $CZ$ operations in total are required to build $|CS^{L}_{2}\rangle$ 
from $| + \rangle^{\otimes 10}$ (for details refer to Ref.~\cite{JF+09}). 

In the second method, one creates classically encoded graph states by means of 
edge local complementations; the quantum encoding is then applied to the 
classically encoded states to obtain logical cluster states.  Initially, the
core qubit $a_1$ of one classical state is entangled with its counterpart $b_1$
in the other state, yielding a two-qubit cluster state 
$|CS_{2} \rangle_{a_1 b_1} =(|0 \rangle_{a_1} |+
\rangle_{b_1} + |1 \rangle_{a_1} |- \rangle_{b_2})/\sqrt{2}$.
Note that the first Hadamard operations $(H_{a_1} \otimes
H_{b_1})$ in $E^{L}_{A} E^{L}_{B}$ leave the state $|CS_{2} \rangle_{a_1 b_1}$ 
invariant. After the GHZ-type $CZ$ operations are performed between $a_1$
($b_1$) and $a_i$ ($b_i$) ($i=2,3,4,5$), through the operation 
$\prod_{i}\prod_{j} CZ_{a_1 , a_i} CZ_{b_1 , b_j}$, a connected graph state
$|G_u \rangle$ is obtained (see Fig.~\ref{fig:Equ-example}). When
two Hadamard operations are subsequently applied to $a_1$ and $b_1$ in $|G_u
\rangle$, the resulting state is transformed 
to another graph state $|G_H\rangle$, given by
\begin{eqnarray}
\label{eq:stateGH} \hspace{-1.5cm} |G_H\rangle_{AB} = {1\over2}
\left[ |+\rangle^{\otimes 5}_{a_1-a_5}\left(|+\rangle^{\otimes
5}_{b_1-b_5} + |-\rangle^{\otimes 5}_{b_1-b_5} \right) +
|-\rangle^{\otimes 5}_{a_1-a_5} \left(|+\rangle^{\otimes 5}_{b_1-b_5}
- |-\rangle^{\otimes 5}_{b_1-b_5} \right) \right].
\end{eqnarray}
This state is a classically encoded two-qubit cluster state.

In Fig.~\ref{fig:Equ-example}, it is shown that the action of
three local complementations on the core vertices $a_1$ and $b_1$ provides 
the desired $CZ$ operations among the physical qubits, reproducing the
state~(\ref{eq:stateGH}). The resulting graph is known as a complete bipartite 
graph state~\cite{Gibbon}: each of the vertices in one neighborhood 
(corresponding to logical register A or B) is connected with all the vertices
of the other neighborhood, and vice versa. While it is possible to construct
$|G_H\rangle_{AB}$ directly by applying 25 $CZ$ operations starting with
$|+\rangle^{\otimes 10}$, it can be efficiently made using only 9 $CZ$ 
operations plus two local operations. For the quantum encoding scheme, the 
final state is given by
\begin{eqnarray}
\label{eq:1st2CS} |CS^{L}_{2}\rangle_{AB}  =
C^{\pentagon}_{a_1-a_5}C^{\pentagon}_{b_1-b_5}|G_H\rangle_{AB} =
E^{L}_{A} E^{L}_{B} CZ_{a_1 b_1} |+\rangle^{\otimes 10}.
\end{eqnarray}
Therefore, the state $|CS^{L}_{2}\rangle$ can be efficiently built
by 19 $CZ$ operations with the help of two Hadamard operations,
instead of 35 $CZ$ operations, and the logical $CZ$ operation
expressed by
\begin{eqnarray}
\label{eq:logicalCZ} CZ^{L}_{AB} \, E^{L}_{A} \, E^{L}_{B} =
E^{L}_{A} \, E^{L}_{B} \, CZ_{a_1 b_1}
\end{eqnarray}
shows that a single physical $CZ$ operation is sufficient to
create a logical $CZ$ operation between logical qubits.

While the encoding procedure for graph states is straightforward to implement,
its interpretation in terms of edge local complementation is not obvious in
general. For example, any encoding of a cluster state with an odd number of 
qubits is difficult to express in terms of edge local complementations, each 
requiring an even number of Hadamard operations. The interpretation of 
encoding linear $2N$-qubit cluster states through edge local complementation is 
straightforward, however. 

Consider for example the
linear four-qubit logical cluster state.  First one assigns five qubits each
to registers A, B, C, and D. After assigning a core qubit from each, designated 
$a_1$, $b_1$, $c_1$, and $d_1$, respectively, one prepares the linear
four-qubit cluster state $|CS_{4}\rangle=CZ_{a_1,b_1}CZ_{b_1,c_1}CZ_{c_1,d_1}
|+\rangle^{\otimes 4}_{a_1-d_1}$. The encoding consists of acting on each 
register with $\prod_{J} E^{L}_{J}|CS_{4}\rangle|+\rangle^{\otimes 16}$ for 
$J=A,B,C,D$, where $E^{L}_{J}$ is given in Eq.~(\ref{eq:ClassicE01}). The first 
step is to perform four Hadamard operations on $|CS_{4}\rangle$. Applying two Hadamard operations on qubits $a_1$ 
and $b_1$,
the intermediate graph state is equal to
\begin{eqnarray}
\hspace{-1cm} |\Psi_{inter} \rangle_{a_1-d_1} = ELC(a_1,b_1)
|CS_{4}\rangle = CZ_{a_1,b_1}CZ_{a_1,c_1} CZ_{c_1,d_1}
|+\rangle^{\otimes 4}_{a_1 - d_1}
\end{eqnarray}
using the results of edge local complementation. Because $c_1$ and $d_1$
share an edge but their neighborhoods are disjoint, it is reasonable to 
associate the subsequent Hadamard operations on qubits $c_1$ and 
$d_1$ with another edge local complementation on the edge $\{c_1,d_1\}$. The 
resulting state is equal to another linear four-qubit cluster state, but with
the vertex labels permuted:
\begin{equation}
|CS_{4}' \rangle = ELC(c_1,d_1) |\Psi_{inter} \rangle_{a_1-d_1}
= CZ_{a_1,b_1}CZ_{a_1,d_1}CZ_{c_1,d_1}|+\rangle^{\otimes 4}_{a_1 - d_1}.
\end{equation}
After four GHZ-type operations and the second set of four Hadamard
operations on $|CS_{4}' \rangle_{a_1-d_1} |+\rangle^{\otimes
16}$, again corresponding to two edge local complementations, the
outcome is a linear four-qubit cluster state with classical encoding. The
Hadamard operations not only effect the edge local complementation; they also
reverse the permutation of the vertex labels above. Finally, the quantum 
encoding scheme on all the qubits yields a logical four-qubit cluster state 
$|CS^{L}_{4} \rangle$, which is sufficient for universal quantum
computation with 5QECC~\cite{JF+09}. This procedure can be trivially extended
to any even-length chain, by applying Hadamard gates in pairs on 
nearest-neighbor edges in order to implement edge local complementations from 
the left boundary of the chain to the right.


\section{Summary and Remarks}

The main result presented in this manuscript is a proof that the action of 
edge local complementation on a graph state can be effected solely through the 
use of two Hadamard operations applied to the edge qubits. A crucial 
assumption in this proof is that the neighborhoods of the edge qubits were 
disjoint, i.e.\ that the neighbors ${\mathcal N}(a)$ of the first edge qubit 
$a$ were different from the neighbors ${\mathcal N}(b)$ of the second qubit 
$b$. Under this restriction, edge local complementation interchanges the 
respective neighborhoods, i.e.\ ${\mathcal N}(a)\leftrightarrow{\mathcal N}(b)$,
while simultaneously making neighbors of all the neighbors. In principle, this
transformation would require a large number of either local unitary operations 
on the graph-state qubits or entangling gates between various qubits. The
distinct advantage of the present scheme is the large savings in the number
of (local) operations required.

As an example of the utility of this insight, we show how edge local
complementation can be used to efficiently create classically encoded cluster
states and one-dimensional logical cluster states based on the five-qubit
error-correcting code, for an even number of logical qubits. In this scheme, 
a physical $CZ$ operation, together with local operations, is sufficient to 
create a logical $CZ$ operation between two logical qubits.

Arbitrary encoded graph states can be obtained by a straightforward extension 
of the procedure described above. The operations encoding a logical qubit, 
Eq.~(\ref{eq:ClassicE01}), are local to the physical qubits comprising the 
logical qubit, and therefore commute with one another. It therefore suffices to
first construct the desired graph state with the core qubits, associate four 
ancillae to each core qubit, and operate independently with 
Eq.~(\ref{eq:ClassicE01}) on each five-qubit register.

Multipartite entangled states that fundamentally include fault tolerance might 
be desirable for practical measurement-based quantum computing and 
multipartite quantum communication~\cite{BCG+09,Knill}. For a generalized 
scheme of level-$l$ logical graph states based on our proposal, the same 
encoding procedure can be used repeatedly. Since the level-1 logical
graph state is made by our protocol, a level-$l$ concatenated
logical graph state becomes an initial state to create a
level-($l$+1) concatenated one ($|\pm^{L}_{l}\rangle \rightarrow
|\pm^{L}_{l+1}\rangle$) with the help of the classical and quantum
coding schemes described above. This concatenated method may also be useful for
building multi-party quantum networks similar to classical complex networks in 
hierarchical organization~\cite{E+A03}.

\section*{Acknowledgements}
The authors are grateful to T. P. Spiller and Jinhyoung Lee for stimulating
discussions. This work was supported by the Natural Sciences and Engineering 
Research Council of Canada, the Mathematics of Information Technology and
Complex Systems Quantum Information Processing Project, and the Quantum 
Interfaces, Sensors, and Communication based on Entanglement Integrating 
Project.

\section*{References}

\end{document}